\newcommand\bcmdtab{\noindent\bgroup\tabcolsep=0pt%
  \begin{tabular}{@{}p{10pc}@{}p{20pc}@{}}}
\newcommand\ecmdtab{\end{tabular}\egroup}
  \title[Better Paracoherent Answer Sets with Less Resources]
        {Better Paracoherent Answer Sets with Less Resources}
  \author[G. Amendola, C. Dodaro, F. Ricca]
         {GIOVANNI AMENDOLA \and CARMINE DODARO \and FRANCESCO RICCA\\
         University of Calabria, Rende, Italy\\
         \email{\{amendola,dodaro,ricca\}@mat.unical.it}}
\newcommand{\penalt}{\mathit{Penalty}}
\newcommand{\ssplit}{\mathit{split}}
\newcommand{\SEQ}{\mathit{SEQ}}
\newcommand{\naf}{\mathit{not} \ }
\newcommand{\la}{\leftarrow}
\newcommand{\mc}{\mathit{mc}}
\newcommand{\head}{\mathit{head}}
\newcommand{\info}{\mathit{info}}
\newtheorem{proposition}{Proposition}
\newtheorem{definition}{Definition}
\newtheorem{example}{Example}
\newtheorem{theorem}{Theorem}
\newcommand{\gcs}{\textsc{split}\xspace}
\newcommand{\gcm}{\textsc{min}\xspace}
\newcommand{\weak}{\textsc{weak}\xspace}
\newcommand{\level}{\textsc{sseq}\xspace}
\newcommand{\wasp}{\textsc{wasp}\xspace}
\newcommand{\HT}{{HT}}
\newcommand{\gap}{\mathcal{G}}
\newcommand{\sig}{\Sigma}
\newcommand{\sigk}{\sig^\kappa}
\newcommand{\cS}{{\mathcal F}}
\newcommand{\p}{P}
\newcommand{\Ik}{I^\kappa}
\newcommand{\tok}[1]{\ensuremath{{#1}^\kappa}}
\newcommand{\toht}[1]{\ensuremath{{#1}^{\HT}}}
\newcommand{\bodyn}[1]{B^-(#1)}
\newcommand{\AS}{AS}
\newcommand{\SigmaP}[1]{{\Sigma}_{#1}^{P}}
\newcommand{\PiP}[1]{{\Pi}_{#1}^{P}}
\newcommand{\MM}{\ensuremath{\mathit{MM}}}
\def\<{\langle}
\def\>{\rangle}
\def\int{\mathit{Int}}
\pgfplotsset{
	filter discard warning=false 
	, legend cell align=left
	, minor grid style={loosely dotted, lightgray}
	, major grid style={loosely dashed, lightgray}
}
\begin{document}

\label{firstpage}

\maketitle

  \begin{abstract}
Answer Set Programming (ASP) is a well-established formalism for logic programming.
Problem solving in ASP requires to write an ASP program whose answers sets correspond to solutions.
Albeit the non-existence of answer sets for some ASP programs can be considered as a modeling feature, it turns out to be a weakness in many other cases, and especially for query answering.
Paracoherent answer set semantics extend the classical semantics of ASP  to draw meaningful conclusions also from incoherent programs, with the result of increasing the range of applications of ASP.
State of the art implementations of paracoherent ASP adopt the semi-equilibrium semantics, but cannot be lifted straightforwardly to compute efficiently the (better) split semi-equilibrium semantics that discards undesirable semi-equilibrium models.
In this paper an efficient evaluation technique for computing a split semi-equilibrium model is presented.
An experiment on hard benchmarks shows that better paracoherent answer sets can be computed consuming less computational resources than existing methods.
Under consideration for acceptance in TPLP.
  \end{abstract}

  \begin{keywords}
    Answer Set Programming, Paracoherent reasoning, Semi-equilibrium models
  \end{keywords}


\section{Introduction}

In the past decades, key advances in Artificial Intelligence research were made thanks to studies in the field of Knowledge Representation and Reasoning (KRR)~\cite{DBLP:reference/fai/3}.
Among the established paradigms of KRR is Answer Set Programming (ASP)~\cite{Baral:2003:KRR:582493,DBLP:journals/cacm/BrewkaET11,DBLP:series/synthesis/2012Gebser}, which is a well-known formalism for logic programming and
non-monotonic reasoning.
%
ASP is based on the stable model (or answer set) semantics~\cite{gelf-lifs-91}, and features efficient implementations~\cite{DBLP:journals/aim/LierlerMR16,DBLP:conf/aaai/GebserMR16}, such as $\textsc{clasp}$~\cite{DBLP:conf/lpnmr/GebserKK0S15,DBLP:journals/tplp/GebserKKS19}, \wasp~\cite{AlvianoDLR15,DBLP:conf/lpnmr/AlvianoADLMR19}, and $\textsc{dlv}$~\cite{LeoneETC2006,DBLP:conf/lpnmr/AlvianoCDFLPRVZ17}.
Problem solving in ASP requires to write an ASP program whose answers sets correspond to solutions~\cite{DBLP:conf/iclp/Lifschitz99}, and then to compute these solutions resorting to an ASP solver.
The availability of efficient implementations made possible the development of concrete applications, and as a matter of fact, ASP has been successifully applied to solve complex problems in Artificial Intelligence~\cite{DBLP:conf/lpnmr/BalducciniGWN01,DBLP:journals/aim/ErdemGL16,DBLP:journals/tplp/GagglMRWW15,DBLP:conf/rr/DodaroLNR15};
Bioinformatics~\cite{DBLP:journals/jetai/CampeottoDP15};
Databases~\cite{DBLP:journals/tplp/ArenasBC03}, and
industrial applications~\cite{DBLP:journals/tplp/DodaroGLMRS16}.

The non-existence of answer sets for some ASP programs can be a modeling feature, but, as argued in~\cite{DBLP:journals/ai/AmendolaEFLM16}, it turns out to be a weakness in many other applications, such as: debugging, model building, inconsistency-tolerant query answering, diagnosis, planning and reasoning about actions.
To remedy to the non-existence of answer sets, paracoherent semantics extend the classical answer set semantics to draw meaningful conclusions also from incoherent programs.
This ASP variant has been termed \textit{paracoherent reasoning}~\cite{DBLP:conf/kr/EiterFM10}.
In particular, Eiter, Fink and Moura improved the paracoherent semantics of \textit{semi-stable models}~\cite{inou-saka-95} avoiding some anomalies with respect to basic modal logic properties by resorting to equilibrium logic~\cite{DBLP:journals/amai/Pearce06}. Thus, this paracoherent semantics is called \textit{semi-equilibrium model (SEQ) semantics}~\cite{DBLP:conf/kr/EiterFM10}.
More recently, \cite{DBLP:journals/ai/AmendolaEFLM16} noticed that, although the SEQ semantics has nice properties, it may select models that do not respect the modular structure of the program.
SEQ semantics use 3-valued interpretations where a third truth value besides \textit{true} and \textit{false} expresses that an atom is \textit{believed true}.
For instance, the incoherent logic program $P =\{ b \leftarrow \naf a;\ c\leftarrow \naf a, \naf c\}$ admits two SEQ models, say $M_1$ and $M_2$. In $M_1$, $b$ is true, $c$ is believed true, and $a$ is false; whereas in $M_2$ $a$ is believed true and both $b$ and $c$ are false.
Now, $M_1$ appears preferable to $M_2$, as, according with a layering (stratification) principle, which is widely agreed in logic programming, one should prefer $b$ rather than $a$, as there is no way to derive $a$ (note that $a$ does not appear in the head of any rule of the program).
Therefore, \cite{DBLP:journals/ai/AmendolaEFLM16}~refine SEQ-models using splitting sequences~\cite{DBLP:conf/iclp/LifschitzT94}, the major tool for modularity in modeling and evaluating answer set programs.
In particular, the refined semantics, called \textit{Split SEQ model semantics},
is able to discard model $M_2$.

The first efficient implementations of paracoherent semantics were proposed recently~\cite{DBLP:conf/aaai/AmendolaDFLR17,DBLP:conf/aaai/AmendolaD0R18}, but they only support semi-stable and semi-equilibrium semantics.
Although the Split SEQ semantics discards some undesirable SEQ models, the existing methods for computing SEQ models cannot be lifted straightforwardly to compute the refined semantics efficiently.
Consequently, \textit{no implementation of  Split SEQ semantics has been available up to now}.

In this paper, we fill this lack presenting the first efficient strategy for computing a split SEQ model.
In particular, we introduce an elegant program transformation, obtained by using weak constraints with levels~\cite{DBLP:journals/tkde/BuccafurriLR00}, that allows for computing a split semi-equilibrium model using a single call to a plain ASP solver, and prove a non-obvious correctness result.
Notably, we exploited the modularity property of split semi equilibrium models to simplify the transformation and avoid the introduction of some rules and symbols that are needed in state of the art epistemic-transformation-based methods for computing SEQ models.

We have implemented the new approach and run an experiment to validate it empirically.
Actually, no direct comparison with an alternative methods can be done, since ours is the first implementation of split SEQ models.
Nonetheless, since split SEQ models are also SEQ models, we could compare it against existing implementations for semi-equilibrium models.
As done previoulsy in the literature~\cite{DBLP:conf/aaai/AmendolaDFLR17,DBLP:conf/aaai/AmendolaD0R18}, we considered hard benchmarks from ASP competitions~~\cite{DBLP:journals/ai/CalimeriGMR16} modeling an application of paracoherent semantics to debugging~\cite{cilc19-debugging}.
The experiment demonstrates that the new method outperforms state of the art methods for computing SEQ models consuming less computational resources, i.e., it uses less memory and terminates in less time.

The paper is structured as follows:
	preliminary notions on ASP and paracoherent answer sets are reported in Section~\ref{sec:preliminaries};
	the description of strategies for\ computing split SEQ models is provided in Section \ref{sec:theor};
	the empirical validation of our approach is presented in Section~\ref{sec:experiments};
	related work is compared and discussed in Section~\ref{sec:related};
	finally, we draw the conclusion in Section~\ref{sec:conclusion}.

\section{Preliminaries}\label{sec:preliminaries}

We start with recalling answer set semantics,
and then present the paracoherent semantics of
semi-equilibrium models, and its refined version based on splitting sequences.

\subsection{Answer Set Programming}
We concentrate on logic programs over a propositional signature $\Sigma$.
A \textit{disjunctive rule} $r$ is of the form

\begin{equation}
 a_{1}\vee\cdots\vee a_{l} \leftarrow b_{1},\ldots,b_{m},\ not \ c_{1},\ldots,\ not \ c_{n},
 \label{eq:rule}
\end{equation}\vspace*{0.1cm}

\noindent where all $a_i$, $b_j$, and $c_k$ are atoms (from $\sig)$; $l > 0$, $m,n\geq 0$; $\naf$represents
\textit{negation-as-failure}. The set $H(r)=\lbrace
a_{1},...,a_{l} \rbrace$ is the \textit{head} of $r$, while $B^{+}(r)=\lbrace b_{1},...,b_{m} \rbrace$ and $B^{-}(r)=\lbrace
c_{1},\ldots,c_{n} \rbrace$ are
the \textit{positive body} and the \textit{negative body} of $r$,
respectively; the \textit{body} of $r$
is $ B(r)=B^{+}(r)\cup B^{-}(r)$. We denote by $At(r)=H(r)\cup B(r)$ the
set of all atoms occurring in $r$.
%
A rule $r$ is a \textit{fact}, if $B(r)=\emptyset$ (we then omit
$\leftarrow$);
\textit{normal}, if $| H(r)| \leq 1$;
and \textit{positive}, if $B^{-}(r)=\emptyset$.
%
A \textit{(disjunctive logic) program} $P$ is a
finite set of disjunctive rules. $P$ is called \textit{normal}
[resp.\ \textit{positive}] if each $r\in P$ is normal [resp.\ positive]. 
The set of all atoms occurring in the program $P$ is denoted by $At(P)=\bigcup_{r\in P} At(r)$.

The \textit{dependency graph} of a program $P$ is the directed graph $DG(P) = \langle V_P, E_P \rangle$ whose nodes $V_P$ are the atoms in $P$ and $E_P$ contains an edge $(a,b)$ if $a$ occurs in $H(r)$ and either $b$ occurs in $B(r)$ or in $H(r)\setminus\{a\}$.
The \textit{strongly connected components} (SCCs) of $P$, denoted $SCC(P)$, are the SCCs of $DG(P)$, which are the maximal sets of nodes $C$ such that every pair of nodes is connected by some path in $DG(P)$ with nodes only from $C$.

Any set $I\subseteq \sig$ is an \textit{interpretation};
it is a \textit{model} of a program $P$ (denoted
$I\models P$) if and only if for each rule $r\in P$, $I\cap H(r)\neq \emptyset$ if
$B^{+}(r)\subseteq I$ and $B^{-}(r)\cap I=\emptyset$ (denoted $I \models
r$).  A model $M$ of $P$ is \textit{minimal}, if and only if no model $M'\subset M$ of $P$
exists.
We denote by $\MM(P)$ the set of all minimal models of $P$ and by $AS(P)$ the set of all {\em answer sets (or stable  models)} of $P$, i.e., the set of all interpretations  $I$ such that
$I\in \MM(P^I)$, where $P^I$ is the well-known \textit{Gelfond-Lifschitz reduct}~\cite{gelf-lifs-91}
of $P$ w.r.t. $I$,  i.e., the set of rules $ a_{1}\vee ...\vee a_{l} \leftarrow b_{1},...,b_{m}$,
obtained from rules $r\in P$ of form  (\ref{eq:rule}), such that $B^-(r) \cap I= \emptyset$.
We say that a program $P$ is
\textit{coherent}, if it admits some answer set (i.e., $AS(P) \neq \emptyset$), otherwise, it is \textit{incoherent}.

In the following, we will also use \emph{constraints}, which are of the form
$$
\leftarrow b_{1},...,b_{m},not \
c_{1},...,not \ c_{n},
$$
with $m,n\geq 0$, to be considered as a shorthand for a rule
$$
\gamma \leftarrow b_{1},...,b_{m},not \ c_{1},...,not \ c_{n},not \ \gamma,
$$
using a fresh atom $\gamma$ that is not occurring elsewhere in the program.
Note that ASP solvers do normally not create any auxiliary symbols for constraints.

Moreover, we recall a useful extension of the answer set semantics by the notion of {\em weak constraint}~\cite{DBLP:journals/tkde/BuccafurriLR00}.
A weak constraint $\omega$ is of the form:
\begin{center}
	$:\sim b_1,\ldots, \ b_m, \ \naf c_{1},\ldots, \ \naf c_n. \ [w@l] $,
\end{center}
where $w$ and $l$ are nonnegative integers, representing a \textit{weight} and a \textit{level}, respectively.
Let $\Pi$ $=$ $P\cup \Omega$ , where $P$ is a set of rules and $\Omega$ is a set of weak constraints.
We call $M$ an answer set of $\Pi$ if it is an answer set of $P$.
We denote by $\Omega(l)$ the set of all weak constraints at level $l$.
For every answer set
$M$ of $\Pi$ and any $l$, the \textit{penalty} of $M$ at level $l$, denoted by $\penalt_\Pi(M,l)$, is defined as
$\sum_{\omega\in \Omega(l), \ M\models B(\omega) } w$. In case $\Pi$ is clear from the context, we omit the subscript.
For any two answer sets $M$ and $M'$ of $\Pi$,
we say $M$ is \textit{dominated} by $M'$ if
 there is $l$ s.t.
$(i)$ $\penalt_\Pi(M',l) < \penalt_\Pi(M,l)$  and $(ii)$
for all integers $k > l$, $\penalt_\Pi(M',k)$ $=$ $\penalt_\Pi(M,k)$.
An answer set of $\Pi$ is \textit{optimal} if it is not dominated
by another one of $\Pi$. We denote by $AS^O(\Pi)$ the set of all optimal answer sets of $\Pi$.



\subsection{Semi-Equilibrium Models}

Here, we introduce the paracoherent semantics of the \textit{semi-equilibrium (SEQ) models} introduced in~\cite{DBLP:conf/kr/EiterFM10}.
Consider an extended signature
$\sigk = \sig\cup\{Ka\mid a\in \sig\}$.
Intuitively, $Ka$ can be read as $a$ is believed to hold.
%
The SEQ models of a program $\p$ are obtained from its \emph{epistemic $\HT$-transformation} $\toht{\p}$, defined as follows.

\begin{definition}
\label{def:ht-trans}
Let $\p$ be a program over $\sig$. Then its epistemic $\HT$-transformation $\toht{\p}$ is obtained from $\p$ by replacing each rule $r$  of the form~(\ref{eq:rule}) in $\p$, such that $\bodyn{r}\neq\emptyset$, with:
\begin{align}
\lambda_{r,1} \vee \ldots \vee \lambda_{r,l} \vee Kc_{1} \vee \ldots \vee Kc_n & \la  b_1,\ldots, b_m, \label{eq:ep-1}\\
a_i & \la \lambda_{r,i}, \label{eq:ep-2}\\
    & \la \lambda_{r,i}, c_j, \label{eq:ep-3}\\
\lambda_{r,i} & \la  a_i, \lambda_{r,k}, \label{eq:ep-4}
\end{align}
for $1\leq i,k\leq l$ and $1\leq j\leq n$, where the $\lambda_{r,i}$, $\lambda_{r,k}$ are fresh atoms; and by adding the following set of rules:
\begin{align}
Ka & \la a,\\
Ka_1 \vee ... \vee Ka_l \vee Kc_{1} \vee ... \vee Kc_n
&\la  Kb_1,..., Kb_m,
\end{align}
\noindent for $a \in \sig$, respectively for every rule $r\in\p$ of the form~(\ref{eq:rule}).
\end{definition}
\noindent Note that for any program $\p$, its epistemic $\HT$-transformation $\toht{\p}$ is positive.
%
For every interpretation $\Ik$ over $\sig'\supseteq\sigk$, let $\gap(\Ik)=\{ Ka\in\Ik\ \mid a\not\in\Ik\}$ denote the atoms believed true but not assigned true, also referred to as the gap of $\Ik$.
Given a set $\cS$ of interpretations over $\sig'$, an interpretation $\Ik\in \cS$ is \emph{maximal canonical in $\cS$}, if no $\tok{J}\in \cS$ exists such that
$\gap(\Ik)\supset\gap(\tok{J})$.
By $\mc(\cS)$ we denote the set of maximal canonical interpretations in $\cS$.
SEQ models are then defined as \emph{maximal canonical} interpretations among the answer sets of $\toht{\p}$.

\begin{definition}
\label{def:seq}
Let $\p$ be a program over $\sig$, and let $\Ik$ be an interpretation
over $\sigk$.
Then, $\Ik\in\SEQ(\p)$
if, and only if, $\Ik\in\{ M\cap\sigk \mid M\in\mc(\AS(\toht{\p}))\}$,
where $\SEQ(\p)$ is the set of semi-equilibrium models of $\p$.
\end{definition}


\subsection{Split SEQ Models}

A set $S\subseteq At(P)$ is a {\em splitting set} of $P$, if for every rule $r$ in $P$ such that $\head(r)\cap S\neq \emptyset$ we have that
$At(r)\subseteq S$. We denote by $b_{S}(P)= \{ r\in P \mid At(r)\subseteq
S\}$
the  {\em bottom} part of $P$,
and by $t_{S}(P)=P\setminus b_{S}(P)$ the {\em top}
part of $P$ relative to $S$.
A {\em splitting sequence}
$S=(S_{1},\ldots,$ $S_{n})$ of $P$ is a sequence of splitting sets $S_i$ of $P$ such that
$S_{i}\subseteq S_{j}$ for each $i<j$.
Let $SCC(P)$ be the set of all strongly connected components of $P$, and let $(C_1,\ldots,C_n)$ be a topological ordering of $SCC(P)$. It is known that $\Gamma=(\Gamma_1,\ldots,\Gamma_n)$, where $\Gamma_j=C_1\cup\ldots\cup C_j$ for $j=1,\ldots,n$, is a splitting sequence of $P$.
So that, we obtain a \textit{stratification} for $P$ in subprograms $(P_1,\ldots,P_n)$ such that
$P_1=b_{\Gamma_1}(P)$,
and
$P_j=b_{\Gamma_{j}}(P)\setminus P_{j-1}$, for $j=2,\ldots,n$.
Given an interpretation $M_i$ over $C_i$, we denote by
$\info(M_i)$ the set of rules $\{ a \mid a\in M_i\} \cup \{ \leftarrow not \ a \mid Ka\in M_i\} \cup \{ \leftarrow a\mid a\in C_i\setminus M_i\}$.

\begin{definition}\label{def:split}
Given a topological ordering $(C_{1},\ldots,C_{n})$ of $SCC(P)$, an interpretation $M$ over $At(P)$ is a {\em semi-equilibrium model of $P$ relative to $\Gamma$} if
%
there is a sequence of interpretations $M_1,\ldots,M_n$ over $\Gamma_1,\ldots\Gamma_n$, respectively, such that
$(1)$ $M=M_n$;
$(2)$ $M_1\in \SEQ(P_1)$;
$(3)$ $M_j\in \SEQ(P_j \cup \info(M_{j-1}))$, for $j=2,\ldots,n$;
and $(4)$ $M$ is maximal canonical among the interpretations over $At(P)$
satisfying conditions $(1)$, $(2)$ and $(3)$.
The set of all semi-equilibrium models of $P$ relative to $\Gamma$ is denoted by $SEQ^{\Gamma} (P)$.
\end{definition}

Since $SEQ^{\Gamma} (P)$  is independent by the given topological ordering of $SCC(P)$~(see, Theorem 5 in~\cite{DBLP:journals/ai/AmendolaEFLM16}),
the $SCC$-models of $P$ have been defined as the set $M^{SCC}(P)$ $=$ $SEQ^{\Gamma} (P)$ for an arbitrary topological ordering of $SCC(P)$.
We will refer to them as split semi-equilibrium models.
Finally, note that $M^{SCC}(P)\subseteq SEQ(P)$.


\begin{example}
\label{es246-te-ctd}
Consider the program

\begin{center}
$P = \left\{
\begin{array}{rcl}
b &\leftarrow& \naf a;\\
d &\leftarrow& b,\ \naf c;\\
c &\la& d\
\end{array}
\right\}.$
\end{center}

Then, $(\{a\},\{b\},\{c,d\})$ is a topological ordering of $SCC(P)$, so that
$\Gamma=(\{a\},$ $\{a,b\},$ $\{a,b,c,$ $d\})$
is a splitting sequence for $P$.
Hence, $\SEQ^\Gamma(P) = \{ \{b,Kb,Kc\} \}$.
Indeed $P_1= b_{\Gamma_1}(P)=\emptyset$ and thus
$\SEQ(P_1)=\{ \emptyset \}$.
Then,
$P_2\cup \info(\emptyset) =\{ b \la \naf a,\; \la a \}$ and thus $\SEQ(P_2\cup \info(\emptyset))=\{\{b\} \}$.
Finally,
$P_3 \cup \info(\{b\}) = \{ d\leftarrow b, \naf c;\ c\la d; \ b;\ 	 \leftarrow a\} $ and thus $\SEQ(P_3\cup \info(\{b\})) = \{ \{b,Kb,Kc \}\}$.
\end{example}

In the following, we will refer to both SEQ models and split SEQ models as \textit{paracoherent answer sets}.

\subsection{Complexity Considerations}
The complexity of various reasoning tasks with paracoherent answer
sets has been analyzed in \cite{DBLP:journals/ai/AmendolaEFLM16}.
In the general case, checking whether an atom $a$ is true in \textit{some} paracoherent answer set (brave reasoning) is $\SigmaP{3}$-complete; whereas checking whether an atom $a$ is true in \textit{all} paracoherent answer sets (cautious reasoning) is $\PiP{3}$-complete.
However, computing a paracoherent answer set is feasible in F$\Delta^P_{3}$, because it is sufficient to find a paracoherent answer set that is cardinality minimal with respect to the gap.


\section{On the Computation of Split SEQ Models}\label{sec:theor}

We start to note that an efficient computation of a split semi-equilibrium model requires a deep theoretical understanding of the paracoherent semantics.
Indeed, a naive implementation of the split semi-equilibrium semantics
considers each possible path that can be generated through the splitting sequence.
Since each subprogram could have more than one answer set, one should explore an exponential number of paths.
Note that, each path generated through the splitting sequence leads to obtain a paracoherent answer set
of the last program (i.e., $P_n \cup \mathit{info}(M_{n-1})$), 
which is not necessarily a paracoherent answer set of the entire program $P$ because it must also be gap-minimal.



\begin{example}\label{ex:FinalMinimize}
Consider the program 

\begin{center}
$P = \left\{
\begin{array}{rcl}
a &\leftarrow & not \ b;\\
b &\leftarrow & not \ a;\\
c &\leftarrow & a, \ not \ c
\end{array}
\right\}.$
\end{center}

In the first layer of $P$, we have the subprogram $P_1$ $=$ $\{a \leftarrow not \ b; \ b\leftarrow not \ a\}$ whose (paracoherent) answer sets are $\{a,Ka\}$ and $\{b,Kb\}$. So that,
considering $\mathit{info}(\{a,Ka\}) \cup \{ c\leftarrow a, \ not \ c\}$, we obtain the paracoherent answer set $\{a,Ka,Kc\}$, while considering $\mathit{info}(\{b,Kb\}) \cup \{c\leftarrow a, \ not \ c\}$, we obtain the (paracoherent) answer set $\{b,Kb\}$.
Hence, $\{a,Ka,Kc\}$ cannot be a paracoherent answer set of $P$, because it has a larger gap w.r.t. $\{b,Kb\}$.
Indeed, $\gap(\{a,Ka,Kc\}) = \{ Kc \} \supset \gap(\{b,Kb\}) = \emptyset$.
\end{example}

From a computational view point, a naive approach is very expensive, as one has to enumerate all possible paracoherent models obtainable from each path to make feasible a final phase of gap minimization.

Hence, we developed a clever strategy to compute a split SEQ model.
Given a program $P$ and a topological ordering $( C_1 ,\ldots, C_n )$ of $SCC(P)$,
we construct a new program $\mathit{split}(P)$ which is the union of $\toht{\p}$ with the program $P_\gamma =\{ \gamma a \leftarrow Ka,not \ a \ |\ a\in At(P)\}$, and the
following set $\Omega$ of weak constraints.
For each $i=1,\ldots, n$, and for each atom $a\in C_i$, the weak constraint $ :\sim \gamma a \ [1:n-i] $ belongs to $\Omega$. Then, we define

$$\ssplit(P)= \toht{\p} \cup P_\gamma \bigcup_{i=1}^n\{ :\sim \gamma a \ [1:n-i] \mid a\in C_i\}.$$

We will show that an optimal answer set of $\ssplit(P)$ is a split SEQ model of $P$. Hence, in particular, it is also a SEQ model of $P$.
First, we highlight a fundamental relation between the penalty of a model at a fixed level and the set of gap atoms of that model belonging to a strongly connected component.

\begin{proposition}\label{prop:gap-penalty}
Let $M$ be an optimal answer set of $\mathit{split}(P)$. Then, for  $l=0,...,n-1$,
$$\penalt(M,l) = |\gap(M)\cap C_{n-l}|.$$ 
\end{proposition}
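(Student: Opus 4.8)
The plan is to reduce the statement to a direct reading of the weak constraints of $\ssplit(P)$, once we have pinned down exactly when the auxiliary atoms $\gamma a$ are true in $M$. The whole proposition hinges on the single rule $\gamma a \la Ka, \naf a$ of $P_\gamma$, together with the fact that each $\gamma a$ occurs only in the head of that rule (and, inertly, in the bodies of the weak constraints, which do not influence answer-set membership). So first I would establish the characterization
\[
\gamma a \in M \iff Ka \in M \text{ and } a \notin M, \qquad\text{i.e.}\qquad \gamma a \in M \iff Ka \in \gap(M).
\]

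To prove this, I would pass to the Gelfond--Lifschitz reduct of $\toht{P} \cup P_\gamma$ with respect to $M$; recall from the definition of weak constraints that $M$ is an answer set of $\ssplit(P)$ iff it is an answer set of $\toht{P} \cup P_\gamma$, so the weak constraints may be ignored at this stage. If $a \in M$, the rule $\gamma a \la Ka, \naf a$ is deleted in the reduct, so $\gamma a$ has no supporting rule; since $\gamma a$ is fresh and appears in no rule body of the hard part, $M \setminus \{\gamma a\}$ would still be a model were $\gamma a \in M$, and minimality of $M$ as a model of the reduct forces $\gamma a \notin M$. If $a \notin M$, the reduct contains $\gamma a \la Ka$, so being a model forces $\gamma a \in M$ when $Ka \in M$, while minimality (again using that $\gamma a$ occurs in no body) forces $\gamma a \notin M$ when $Ka \notin M$. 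This exhausts the cases and yields the equivalence. I expect this reduct-and-minimality bookkeeping to be the only real content of the argument, the key observation being that $\gamma a$ never appears in a body of $\toht{P} \cup P_\gamma$.

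With the characterization in hand, the remainder is pure counting. By construction of $\Omega$, the weak constraint attached to an atom of $C_i$ lives at level $n-i$, and $n-i = l$ iff $i = n-l$; hence the constraints of $\ssplit(P)$ at level $l$ are exactly those of the form $:\sim \gamma a\ [1{:}l]$ with $a \in C_{n-l}$, each of weight $1$ and body $\{\gamma a\}$. Therefore
\[
\penalt(M,l) = \sum_{a \in C_{n-l},\ \gamma a \in M} 1 = |\{a \in C_{n-l} \mid \gamma a \in M\}|.
\]
Substituting the equivalence $\gamma a \in M \iff Ka \in \gap(M)$ rewrites the right-hand side as the number of believed-true-but-not-true atoms whose underlying atom lies in $C_{n-l}$, i.e. $|\gap(M) \cap C_{n-l}|$ under the standard identification of each $Ka \in \gap(M)$ with its base atom $a$, which gives the claimed identity. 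I would finally remark that optimality of $M$ is not actually needed for this proposition—the equality holds for every answer set of $\ssplit(P)$—but it is the optimal case that is relevant for the subsequent correctness proof.
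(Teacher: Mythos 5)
Your proposal is correct and follows essentially the same route as the paper's proof sketch, which consists of exactly the same counting computation over $\Omega(l)$. The only difference is that you explicitly justify the step the paper leaves implicit --- the characterization $\gamma a \in M$ iff $Ka \in \gap(M)$ via the reduct of the rule $\gamma a \la Ka, \naf a$ --- and you correctly note both the identification of $Ka$ with its base atom in $|\gap(M)\cap C_{n-l}|$ and the fact that optimality of $M$ is not needed for this equality.
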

\begin{proof}[Proof Sketch]
Since
$\Omega(l)$ $=$ \mbox{$\{:\sim \gamma a \ [1:l] \mid a\in C_{n-l}\}$,} we obtain that
$$\penalt(M,l) = \sum_{\omega\in \Omega(l), M\models B(\omega) } w = \sum_{:\sim \gamma a\in \Omega(l), M\models \{\gamma a\} } 1 = \sum_{a\in C_{n-l}, \gamma a\in M } 1 = |\gap(M)\cap C_{n-l}|.$$
\end{proof}

Now we can prove our main result.

\begin{theorem}\label{th:main}
Let $P$ be a program and $M$ be an optimal answer set of $\mathit{split}(P)$.
Then, $M\setminus \{\gamma a \ |\ a\in At(P)\}$ is a split SEQ model of $P$.
\end{theorem}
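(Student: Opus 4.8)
The plan is to show that an optimal answer set $M$ of $\ssplit(P)$, once the bookkeeping atoms are stripped, decomposes along the splitting sequence $\Gamma=(\Gamma_1,\dots,\Gamma_n)$ into a witness for Definition~\ref{def:split}, whose gap is inclusion-minimal among all such witnesses. Write $\bar M = M\cap\sigk$. First I would clear away the auxiliary construction: the rules $P_\gamma$ define each fresh $\gamma a$ deterministically, so $\gamma a\in M$ iff $Ka\in\gap(M)$; since the $\gamma a$ occur only in $P_\gamma$ and in the weak constraints, and weak constraints never change which interpretations are answer sets, the answer sets of $\ssplit(P)$ are exactly the answer sets of $\toht{\p}$ extended by their gap indicators. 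Hence $\bar M$ is the $\sigk$-part of an answer set of $\toht{\p}$ (a candidate SEQ model), with the $\lambda$-atoms projected away as in Definition~\ref{def:seq}. By Proposition~\ref{prop:gap-penalty}, optimality of $M$ says precisely that the vector $(|\gap(M)\cap C_1|,\dots,|\gap(M)\cap C_n|)$ is lexicographically minimal, with $C_1$ carrying the highest priority (since $C_i$ sits at level $n-i$).

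The core step is a modularity argument. I would first lift the SCC splitting sequence of $P$ to one of $\toht{\p}$: the per-rule blocks~(\ref{eq:ep-1})--(\ref{eq:ep-4}) together with the belief-propagation rule $Ka_1\vee\dots\vee Kc_n\la Kb_1,\dots,Kb_m$ mention only atoms of a single rule, whose atoms lie in one $\Gamma_j$, while the rules $Ka\la a$ are atom-local; so each $\Gamma_j$, enriched with the corresponding $K$- and $\lambda$-atoms, is a splitting set. Applying the splitting theorem~\cite{DBLP:conf/iclp/LifschitzT94} layer by layer, $\bar M$ induces interpretations $M_1,\dots,M_n$ over $\Gamma_1,\dots,\Gamma_n$, and the $j$-th restriction is an answer set of the layer program fed with the solution from below, which one identifies on $C_j$ with $\AS(\toht{P_j\cup\info(M_{j-1})})$. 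The lexicographic, bottom-first optimality then forces $|\gap\cap C_j|$ to be cardinality-minimal given the layers below; since a cardinality-minimal gap is a fortiori inclusion-minimal, hence maximal canonical, we obtain $M_1\in\SEQ(P_1)$ and $M_j\in\SEQ(P_j\cup\info(M_{j-1}))$, establishing conditions $(1)$--$(3)$ of Definition~\ref{def:split}.

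It remains to prove condition $(4)$, that $\bar M$ is maximal canonical among all such witnesses, and here the gap between cardinality and inclusion minimality does the work. Suppose some witness $M'$ satisfied $\gap(M')\subsetneq\gap(\bar M)$. Gluing its layers via the splitting theorem, $M'$ is again an answer set of $\toht{\p}$, hence extends to an answer set of $\ssplit(P)$. Let $C_k$ be the least-index component on which the two gaps differ; strict inclusion gives $\gap(M')\cap C_k\subsetneq\gap(\bar M)\cap C_k$ and equality on $C_1,\dots,C_{k-1}$, so by Proposition~\ref{prop:gap-penalty} the penalties agree at every level above $n-k$ while $M'$ is strictly cheaper at level $n-k$. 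Thus $M'$ dominates $M$, contradicting optimality. Therefore no such $M'$ exists, $(4)$ holds, and $\bar M\in\SEQ^{\Gamma}(P)=M^{SCC}(P)$ is a split SEQ model.

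I expect the main obstacle to be the modularity step. One must prove that the SCC splitting sequence of $P$ genuinely transfers to $\toht{\p}$ and, more delicately, that the layer-$j$ subproblem produced by the splitting theorem coincides with $\toht{P_j\cup\info(M_{j-1})}$ on $C_j$. In particular, one has to verify that the belief atoms $Ka$ for $a$ in lower components are transmitted upward by $\info$ exactly as the splitting theorem plugs them in, so that local maximal canonicity in each layer is faithfully recovered and the below-gap is held fixed while the $C_j$-gap is minimized. Once this correspondence is secured, the reduction of condition $(4)$ to a level-wise penalty comparison through Proposition~\ref{prop:gap-penalty} is routine.
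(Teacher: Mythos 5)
Your proposal follows essentially the same route as the paper's proof: the paper argues by induction on $|SCC(P)|$, which is just the layer-by-layer decomposition you describe unrolled, uses Proposition~\ref{prop:gap-penalty} to turn lexicographic optimality of the penalty vector into component-wise gap minimality (cardinality, hence inclusion, minimality per layer), and derives condition $(4)$ by the same domination argument against a hypothetical witness with strictly smaller gap. The modularity step you flag as the main obstacle---that the SCC splitting sequence lifts to $\toht{\p}$ and that the layer-$j$ subproblem coincides with $P_j^{HT}\cup\info(M_{j-1})$---is precisely the point the paper dispatches with ``by construction'' and ``as $M\in AS(P^{HT})$'' inside its inductive step, so your account contains no idea the paper's proof supplies and you omit.
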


\begin{proof}
We prove the claim by induction on the cardinality of $|SCC(P)|$.

In case of $|SCC(P)|=1$, we have a unique strongly connected component, say $C$, of $P$. Hence, we have to consider the unique topological ordering $(C)$. It leads to have the following set of weak constraints $\Omega=\Omega(1) = \{ :\sim \gamma a \ [1:0] \mid a\in C\}$. Hence, $M$ is an answer set of $P^{HT}\cup P_\gamma$ such that a minimum number of weak constraints in $\Omega$ is violated.
This means that $M$ is cardinality minimal with respect to the gap atoms. Therefore, it is also
subset minimal with respect to the gap atoms, and so, $M'=M\setminus \{\gamma a \ |\ a\in At(P)\}$ is a
semi-equilibrium model of $P$. As for $n=1$, the split semi-equilibrium models of $P$ coincide with the semi-equilibrium models of $P$, then $M'$ is a split semi-equilibrium model of $P$.

Now, assume the claim holds in case of programs with $n-1$ strongly connnected components, and we want to prove that it also holds for programs with $n$ strongly connected components.
Let $SCC(P)=\{C_1,\ldots,C_n\}$, and let $(P_1,\ldots,P_n)$ be the corresponding stratification for $P$.
Let $M_{n-1}= M\cap(C_{n-1}\cup \{Ka\ |\ a\in C_{n-1}\})$.
By construction, $M_{n-1}\in AS^O ((P^{HT}\setminus P_n^{HT})\cup P_\gamma\cup (\Omega\setminus\Omega(n)))$.
Hence, by inductive hypothesis,
$M_{n-1}\setminus \{\gamma a \ |\ a\in At(P)\}$ is a split semi-equilibrium model of $P\setminus P_n$.
Now, we have to prove that
$M'=M\setminus \{\gamma a \ |\ a\in At(P)\}\in SEQ( P_n \cup \info(M_{n-1}))$.
Consider the program
$\Pi = P_n^{HT} \cup \info(M_{n-1})\cup \Omega(n)$.
First, $(i)$ $M\in AS(P_n^{HT}\cup \info(M_{n-1}))$, as $M\in AS(P^{HT})$.
Second, $(ii)$ $M$ violates a minimum number of weak constraints in $\Omega(n)$. Indeed, by contradiction, there exists $I$ violating a strictly less number of weak constraints in $\Omega(n)$ than $M$. So that, such a $I$ dominates $M$ with respect to $split(P)$, against the assumption that $M$ is an optimal answer set of $split(P)$.
Then, by $(i)$ and $(ii)$, it holds that $M\in AS^O(\Pi)$.
Therefore, 
$M'\in SEQ(P_n \cup \info(M_{n-1}))$.
Finally, we claim that $M'$ is maximal canonical among the interpretations over $At(P)$ satisfying conditions $(1)$, $(2)$ and $(3)$ in Definition~\ref{def:split}. Assume, by contradiction, that there is an interpretation $I$ satisfying conditions $(1)$, $(2)$ and $(3)$ in Definition~\ref{def:split} such that $\gap(I)\subset \gap(M)$.
Hence, by Proposition~\ref{prop:gap-penalty},
 there is some nonnegative integer $l$ such that
$(i)$ $\penalt(I,l) < \penalt(M,l)$  and $(ii)$
for all integers $k > l$, $\penalt(I,k)$ $=$ $\penalt(M,k)$.
Then, $M$ is \textit{dominated} by $I$.
Thus, $M$ is not an optimal answer set of $\ssplit(P)$, against the hypothesis.
\end{proof}

Note that the split semi-equilibrium model (hence the semi-equilibrium model) that such an algorithm will find, generally, is not cardinality minimal among the split semi-equilibrium models of the given program.
This is coherent with complexity results, indeed computing optimal answer sets of ASP programs with weak constraints is known to be a $F\Delta^P_3$ task for disjunctive programs~\cite{DBLP:journals/tkde/BuccafurriLR00}, and our technique can be implemented by an algorithm that runs in polynomial time (indeed, it consists of two polynomial tasks: the construction of the epistemic transformation and the computation of the SCCs).

\begin{example}
Consider, for instance, the following program
\begin{center}
$P = \left\{
\begin{array}{rcl}
a &\leftarrow& not \ b; \\
b&\leftarrow& not \ a;\\
c&\leftarrow& b, \ not \ c;\\
d& \leftarrow& a, \ not \ c, \ not \ d; \\
e& \leftarrow& d\\
\end{array}\right\}$.
\end{center}
Hence, we have to consider the stratification of $P$ given by
$P_1=\{a\leftarrow not \ b;$  $b\leftarrow not \ a\}$;
$P_2=\{ c\leftarrow b, \ not \ c\}$;
$P_3=\{ d \leftarrow a, \ not \ c, \ not \ d \}$; and
$P_4=\{ e \leftarrow d \}$.
At the first step, we obtain $\{a,Ka\}$ and $\{b,Kb\}$ as SEQ models of $P_1$.
At the second step, $\{a,Ka\}$ is the SEQ model of $P_2\cup \info(\{a,Ka\})$, and $\{b,Kb,Kc\}$ is the SEQ model of $P_2\cup \info(\{b,Kb\})$.
At the third step, $\{a,Ka,Kd\}$ is the SEQ model of $P_3\cup \info(\{a,Ka\})$, and $\{b,Kb,Kc\}$ is the SEQ model of $P_3\cup \info(\{b,Kb,Kc\})$.
At the fourth and final step, $\{a,Ka,Kd,Ke\}$ is the SEQ model of $P_4\cup \info(\{a,Ka,Kd\})$, and $\{b,Kb,Kc\}$ is the SEQ model of $P_4\cup \info(\{b,Kb,Kc\})$.
Therefore, $\{a,Ka,Kd,Ke\}$ and $\{b,Kb,Kc\}$ are the split SEQ models of $P$.
However, $\{a,Ka,Kd,Ke\}$ is preferred to $\{b,Kb,Kc\}$.
Indeed, $\{b,Kb,Kc\}$ violates the weak constraint $:\sim \gamma c$, that is at a lower level than $:\sim \gamma d$ and $:\sim \gamma e$, that are violated by $\{a,Ka,Kd,Ke\}$.
\end{example}

Moreover, the split SEQ semantics allows to make a fundamental simplification of symbols in the $\HT$-epistemic transformation of the program.
Indeed, given a program $P$ and a stratification for $P$ in subprograms $(P_1,\ldots,P_n)$, whenever $P_1$, ..., $P_k$, with $k<n$, are coherent programs, we have no need to compute the $\HT$-epistemic transformation of $P_1$, ..., $P_k$, but we can directly compute the answer sets of $P_1\cup\cdots\cup P_k$. So that, if $M\in AS(P_1\cup\cdots\cup P_k)$, then
$M\cup \{Ka\ |\ a\in M\}$ is a paracoherent answer set of $P_1\cup\cdots\cup P_k$.

\begin{theorem}\label{th:optimization}
Let $P$ be a program, $(P_1,\ldots,P_n)$ be a stratification for $P$, and $k$ be the maximal number so that $P_j$ is coherent, for each $j=1,...,k$.
Then,
\begin{center}
$M^{SCC}(P) = \{ I^K\cup J \ |\ I\in \AS(P_{coh}) \wedge J\in M^{SCC}(P_{inc}^I)\}$,
\end{center}
where
$I^K=I\cup\{Ka \ |\ a\in I\}$;
$P_{coh}=P_1\cup\cdots\cup P_k$;
$P_{inc}=P_{k+1}\cup\cdots\cup P_n$;
and $P_{inc}^I$ comes from $P_{inc}$ by removing each rule $r$ s.t. $B^-(r)\cap I\neq\emptyset$, and each atom in $At(P_{inc})\cap I$.
\end{theorem}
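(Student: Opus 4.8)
The plan is to prove the set equality by two inclusions, peeling the coherent prefix off the stratification and arguing by induction on $k$. The backbone is the fact recalled just before the statement: a coherent program $Q$ satisfies $\SEQ(Q)=\{I^K\mid I\in \AS(Q)\}$, so all its paracoherent answer sets have empty gap. First I would feed this, layer by layer, into the construction of Definition~\ref{def:split}: in any witnessing sequence $M_1,\ldots,M_n$ for a split SEQ model of $P$, coherence of $P_1,\ldots,P_k$ forces $M_1,\ldots,M_k$ to have empty gap, hence $M_k=I^K$ for some set $I$ of atoms over $\Gamma_k$. The classical splitting theorem for answer sets, applied to the initial segment $\Gamma_1\subseteq\cdots\subseteq\Gamma_k$ of the splitting sequence, identifies this $I$ as an element of $\AS(P_{coh})$, and conversely realizes every $I\in\AS(P_{coh})$ as such a prefix; this is exactly what legitimizes replacing the epistemic transformation of the coherent layers by a plain answer-set computation. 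The base case $k=0$ is trivial, since then $P_{coh}=\emptyset$ and $P_{inc}=P$.

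The second step is a program-level equivalence showing that, once the bottom is fixed to $I^K$, continuing the split construction on $P_{k+1},\ldots,P_n$ through $\info(I^K)$ is the same as running it on $P_{inc}^I$. The facts and constraints in $\info(I^K)$ delete every rule whose negative body meets $I$ (these are killed by the reduct, matching the removal of rules with $\bodyn{r}\cap I\neq\emptyset$) and evaluate away the positive occurrences of the true atoms of $I$ (matching the deletion of atoms in $At(P_{inc})\cap I$), while the atoms of $\Gamma_k\setminus I$ stay unsupported and therefore false, exactly as they are absent from $P_{inc}^I$. I would check that this simplification commutes with the transformation $\toht{\cdot}$ and with the $\SEQ$ and $M^{SCC}$ operators, so that the tails $M_{k+1},\ldots,M_n$ of witnessing sequences for $P$ are in gap-preserving bijection with witnessing sequences for $P_{inc}^I$. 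This reduces conditions $(1)$--$(3)$ of Definition~\ref{def:split} for $P$ to the same conditions for $P_{inc}^I$, with the candidate model written as $M=I^K\cup J$ and $\gap(M)=\gap(J)$.

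What remains, and what I expect to be the main obstacle, is matching the maximal-canonical condition $(4)$. Because the coherent prefix contributes no gap, $\gap(I^K\cup J)=\gap(J)$ and every candidate gap lives inside $\{Ka\mid a\in At(P_{inc})\}$; hence global gap-minimization over all interpretations satisfying $(1)$--$(3)$ must be shown to coincide with first fixing $I\in\AS(P_{coh})$ and then taking the gap-minimal interpretations of $P_{inc}^I$. The nontrivial direction is to certify that each $J\in M^{SCC}(P_{inc}^I)$ still survives global minimization: since gaps coming from different prefix answer sets $I,I'$ all live in the common atom set $At(P_{inc})$, they are a priori comparable even though $I$ and $I'$ need not be, so one has to rule out that a tail built over some $I'$ strictly refines $\gap(J)$. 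I anticipate the argument to rely essentially on the maximality of $k$ (no further bottom layer is coherent) and on the order-independence of $M^{SCC}$ recalled above, reducing the global comparison to a component-wise one; isolating exactly why the per-$I$ minimization is globally sound is the crux, and it is where I would spend most of the effort.
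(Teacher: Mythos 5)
The paper states Theorem~\ref{th:optimization} without any proof, so there is no official argument to compare against; your proposal has to stand on its own, and as it stands it has two genuine gaps. The first is in your opening step: standalone coherence of each layer $P_j$ does not force the prefix $M_1,\ldots,M_k$ of a witnessing sequence to have empty gap, because what is evaluated at step $j$ is $P_j\cup\info(M_{j-1})$, and this program can be incoherent even when $P_j$ alone is coherent. Take $P_1=\{b\}$ and $P_2=\{c\la b,\ \naf c\}$: both are coherent in isolation (so the stated hypothesis gives $k=2$ and $P_{coh}=P$), yet $P_2\cup\info(\{b,Kb\})$ is incoherent, $\AS(P_{coh})=\emptyset$, and the right-hand side of the claimed equality is empty while $M^{SCC}(P)=\{\{b,Kb,Kc\}\}$. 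So either the hypothesis must be read as coherence of $P_j\cup\info(M_{j-1})$ for every reachable $M_{j-1}$ (which is what the odd-cycle test of Proposition~\ref{prop:opt} actually certifies), or your induction is unsound from the base case on; in either event the implication you assert is exactly what needs an argument, and the splitting theorem for the coherent prefix only becomes applicable after it is settled.

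The second gap is the one you yourself flag as the crux, and it is worse than a missing lemma: the inclusion ``every $I^K\cup J$ with $J\in M^{SCC}(P_{inc}^I)$ survives global minimization'' is false, so no appeal to the maximality of $k$ or to order-independence of $M^{SCC}$ can close it. The paper's own Example~\ref{ex:FinalMinimize} witnesses this: there (reading $k=1$) $I=\{a\}$ yields $P_{inc}^I=\{c\la \naf c\}$ and $J=\{Kc\}\in M^{SCC}(P_{inc}^I)$, so $\{a,Ka,Kc\}$ belongs to the right-hand side; but $I'=\{b\}$ yields the gap-free candidate $\{b,Kb\}$, which dominates it, so $\{a,Ka,Kc\}\notin M^{SCC}(P)$ --- precisely the cross-prefix domination you hoped to rule out. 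The stated equality can only hold if the right-hand side is wrapped in a final maximal-canonical selection $\mc(\cdot)$ (or under an additional hypothesis making gaps arising from distinct $I$ incomparable); you should either prove the statement in that corrected form, for which your first two steps are a reasonable skeleton, or exhibit the counterexample. Only the $\subseteq$ direction of the equality as written is provable along your lines.
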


Finally, note that to check if a program is coherent, is a well-known $\Sigma^P_2$-complete problem.
Hence, in the implementation we need to consider a \textit{sufficient} condition to detect coherent programs in polynomial time.
It is known that, if a program has no cycle in the dependency graph having an odd number of negated arcs, then it is coherent.

\begin{proposition}\label{prop:opt}
Given a program $P$, detecting a cycle in the dependency graph of $P$ having an odd number of negated arcs, can be done in linear time with respect to $|E_P|$.
\end{proposition}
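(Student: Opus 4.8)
The plan is to reduce the detection of an odd-negated-arc cycle to a single strongly-connected-components computation on an auxiliary ``parity-split'' graph, for which linear-time algorithms are available. First I would make the sign structure of $DG(P)$ explicit: tag each arc $(a,b)\in E_P$ as \emph{negated} when it is witnessed by a rule $r$ with $b\in B^{-}(r)$, and as \emph{positive} when witnessed by a rule with $b\in B^{+}(r)\cup(H(r)\setminus\{a\})$; an arc may carry both tags, in which case it will simply be handled with both parities. The goal is then the purely graph-theoretic statement: decide whether $DG(P)$ has a directed cycle whose number of negated arcs (over some admissible choice of sign per arc) is odd.

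Next I would build the split graph $G'$ on the vertex set $\{a^{+},a^{-}\mid a\in V_P\}$, where a positive arc $(a,b)$ contributes the parity-preserving arcs $a^{+}\to b^{+}$ and $a^{-}\to b^{-}$, and a negated arc $(a,b)$ contributes the parity-flipping arcs $a^{+}\to b^{-}$ and $a^{-}\to b^{+}$ (an arc carrying both tags contributes all four). By construction, directed walks in $G'$ starting at $a^{+}$ project exactly onto directed walks of $DG(P)$ starting at $a$, with the second index recording the running parity of negated arcs; in particular $b^{-}$ is reached iff the projected walk from $a$ to $b$ uses an odd number of negated arcs. I would then establish the central equivalence: $DG(P)$ has an odd cycle iff there is a vertex $a$ with $a^{-}$ reachable from $a^{+}$ in $G'$. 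The ``$\Leftarrow$'' direction is the delicate one: a path from $a^{+}$ to $a^{-}$ projects to a closed walk at $a$ of odd parity, and the multiset of arcs of any closed directed walk has balanced in/out degree at every vertex, hence decomposes into simple directed cycles whose parities sum, modulo $2$, to the parity of the walk; therefore at least one of these simple cycles is odd. The ``$\Rightarrow$'' direction is immediate by projecting an odd simple cycle through $a$.

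To collapse the many reachability queries into one SCC computation, I would observe that the involution swapping $a^{+}$ and $a^{-}$ for every $a$ is an automorphism of $G'$ (it maps the arc set onto itself, since it preserves parity-preserving arcs and swaps the two parity-flipping arcs of each negated arc). Consequently $a^{-}$ is reachable from $a^{+}$ iff $a^{+}$ is reachable from $a^{-}$, so the reachability condition is equivalent to $a^{+}$ and $a^{-}$ lying in the same strongly connected component of $G'$. The test thus becomes: compute $SCC(G')$ once and check whether some pair $a^{+},a^{-}$ is co-located. For the running time, $G'$ has $2|V_P|$ vertices and at most $4|E_P|$ arcs and is built in a single linear pass over the rules/arcs; a linear-time SCC algorithm computes $SCC(G')$ in $O(|V'|+|E'|)$; and the co-location check is one linear scan. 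Since atoms incident to no arc cannot lie on any cycle and may be discarded, the whole procedure runs in time linear in $|E_P|$, as claimed.

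The main obstacle I anticipate is precisely the equivalence between odd closed walks and odd \emph{simple} cycles: one must justify, carefully and for the directed setting, the decomposition of an arbitrary closed walk (which may repeat arcs) into simple cycles together with the additivity of parity, since a careless argument would only produce an odd closed walk rather than the genuine simple cycle that witnesses the property in question.
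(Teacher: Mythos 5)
Your proof is correct, but it takes a genuinely different route from the paper's. The paper keeps a single copy of each atom and instead \emph{subdivides} every positive arc $(a,b)$ through a fresh node $ab$, so that a cycle with $k$ negated arcs and $p$ positive arcs becomes a directed cycle of length $k+2p$; detecting an odd number of negated arcs thus reduces to detecting an odd-length directed cycle, which the paper settles by a two-colorability (bipartiteness) test on the underlying undirected graph. You instead build the parity double cover on $\{a^{+},a^{-}\}$, prove the reachability characterization via the decomposition of an odd closed walk into simple cycles, and collapse all reachability queries into one SCC computation using the sign-swapping automorphism. Both are linear in $|E_P|$. What your version buys: it is self-contained and fully rigorous as stated --- the closed-walk-to-simple-cycle step you flag as delicate is exactly the point where a careless argument would fail, and you supply it; moreover the single-SCC-plus-co-location test needs no further qualification. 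What the paper's version buys: a smaller and simpler auxiliary graph and an off-the-shelf 2-coloring check; but note that the bipartiteness criterion for the existence of an odd \emph{directed} cycle is only valid within a strongly connected component (a digraph can have a non-bipartite underlying graph yet no directed cycle at all), so the paper's test must in any case be applied per SCC --- which is precisely how the implementation described in Section 4 applies it. In that sense your construction makes explicit the SCC-level reasoning that the paper's sketch leaves implicit.
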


\begin{proof}[Proof Sketch]
Let $DG(P)$ be the dependency graph of $P$. We consider a directed graph
$G'$ such that for each positive arc in $DG(P)$, namely $(a,b)$, we introduce a fresh node, namely $ab$, and replace $(a,b)$, by two edges $\{a,ab\}$ and $\{ab,b\}$.
Hence, if $DG(P)$ contains a cycle having an odd number of negated arcs then $G'$ contains a cycle of odd length.
The claim holds by the fact the a directed graph does not contain a directed cycle of odd length if, and only if, it is bipartite when treated as an undirected graph, i.e., it can be colored with two colors. This check can be done in linear time with respect to the number of arcs in the input graph~\cite{DBLP:books/daglib/0015106}.
\end{proof}

We conclude this section observing that Theorem~\ref{th:main} and Theorem~\ref{th:optimization} hold (without modifications) also if one considers the \textit{extended externally supported program} $P^{es}$, introduced in~\cite{DBLP:conf/aaai/AmendolaD0R18} to characterize semi-equilibrium models, in place of the $\HT$-epistemic transformation.

\section{Experiments}\label{sec:experiments}
In this section we present the results of an experimental analysis conducted to analyze the performance of the new strategy for computing a split semi-equilibrium model presented in the previous section.

\subsection{Implementation}
To compute a split semi-equilibrium model we have implemented in a rewriter tool a program transformation that takes as input an ASP program $P$ and produces as output an optimized version of $\mathit{split}(P)$.
In particular, the rewriter implements the efficient epistemic transformation $P^{es}$ of ~\cite{DBLP:conf/aaai/AmendolaD0R18}, which is known to be much more efficient than the classic $HT$-epistemic transformation, and implements the Tarjan algorithm (see e.g.,~\cite{DBLP:books/daglib/0015106}) to compute a topological order of the $SCC(P)$ and build the weak constraints $\Omega$.
During the rewriting process, the components are also subject to a (modified) two-colorability check (see Proposition~\ref{prop:opt}) applied following the topological order to optimize the output as indicated in Theorem~\ref{th:optimization}.
Thus, a split semi-equilibrium model is computed by evaluating the optimized $\mathit{split}(P)$ with the ASP solver \wasp \cite{AlvianoDLR15,DBLP:journals/tplp/AlvianoD16}.

\subsection{Experimental Setting}
A practical approach for the computation of a split semi-equilibrium model has been presented for the first time in this paper.
Therefore, it is not possible to compare the performance of our implementation with alternative approaches for computing split semi-equilibrium models.
Since all split semi-equilibrium models are semi-equilibrium models,
we focus on the task of computing a semi-equilibrium model, and compare our implementation with the state of the art ones presented in~\cite{DBLP:conf/aaai/AmendolaD0R18}, namely:
\gcs, \gcm, and \weak.
Note that, the labels correspond to the algorithm names used in~\cite{DBLP:conf/aaai/AmendolaD0R18}, where \gcs is in no way related to the split semi equilibrium semantics; rather, the name \gcs was chosen to remind the behavior of the algorithm that ``splits'' the set of gap atoms of a candidate solution to perform gap minimization (for details see~\cite{DBLP:conf/aaai/AmendolaD0R18}).
In the following, our implementation is labeled \level, since its distinguishing feature is to compute a split semi equilibrium model.

In order to perform a fair comparison, in this experiment we used the same version of the \wasp solver, and the same experimental settings described in~\cite{DBLP:conf/aaai/AmendolaD0R18}.
In particular, we considered all the incoherent instances from the latest ASP Competition~\cite{DBLP:journals/jair/GebserMR17} that feature neither \textit{aggregates}, nor \textit{choice rules}, nor \textit{weak constraints}, since such features are not supported by the paracoherent semantics~\cite{DBLP:journals/ai/AmendolaEFLM16}.
This benchmark setting is of particular interest for paracoherent reasoning since it consists of debugging hard ASP programs, one of the main motivations of paracoherent ASP.
In particular, the problem to be solved is the computation of an explanation for the non-existence of answer sets.
Execution times and memory usage were limited to 1200 seconds and 8 GB, respectively.

\begin{figure*}[t!]
	\centering
	\begin{tikzpicture}[scale=.9]
	\pgfkeys{%
		/pgf/number format/set thousands separator = {}}
	\begin{axis}[
	scale only axis
	, font=\small
	, legend style = {draw=none,fill=none, at={(0.65,1.0)}, legend columns=2}
	, xlabel={Solved instances}
	, ylabel={Execution time (s)}
	, width=0.97\textwidth
	, height=0.35\textwidth
	, ymin=0, ymax=1300
	, xmin=0, xmax=110
	, ytick={0,200,400,600,800,1000,1200}
	, xtick={0,10,20,30,40,50,60,70,80,90,100,110}
	, major tick length=2pt
	]
	\addplot [mark size=3.5pt, color=black, mark=triangle] [unbounded coords=jump] table[col sep=semicolon, y index=2] {./cactus.csv};
	\addlegendentry{\gcm}

	\addplot [mark size=3.5pt, color=black, mark=o] [unbounded coords=jump] table[col sep=semicolon, y index=3] {./cactus.csv};
	\addlegendentry{\gcs}

	\addplot [mark size=3.5pt, color=black, mark=diamond] [unbounded coords=jump] table[col sep=semicolon, y index=4] {./cactus.csv};
	\addlegendentry{\weak}

	\addplot [mark size=3.5pt, color=blue, mark=diamond*] [unbounded coords=jump] table[col sep=semicolon, y index=1] {./cactus.csv};
	\addlegendentry{\level}
	\end{axis}
	\end{tikzpicture}
\caption{Comparison of the best performing algorithms.}\label{fig:cactus}
\end{figure*}
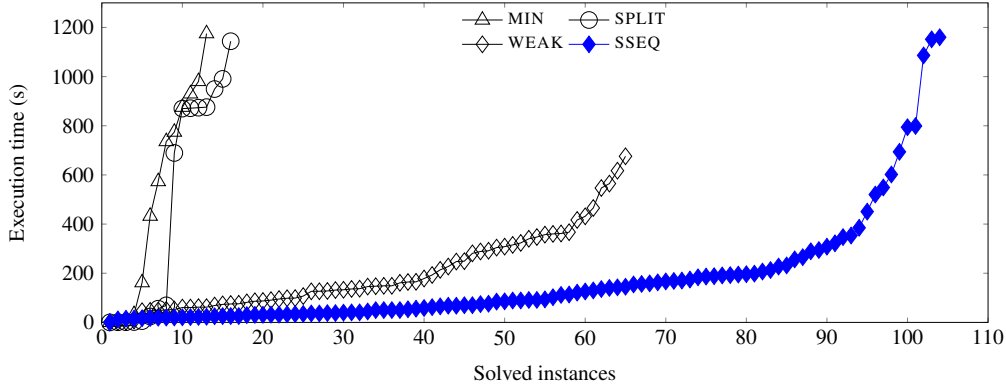

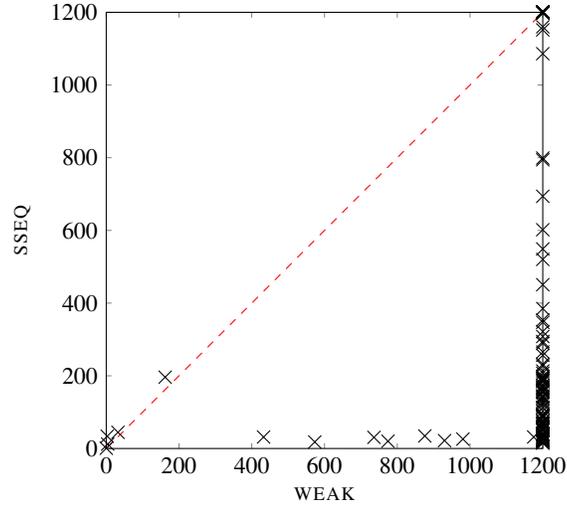
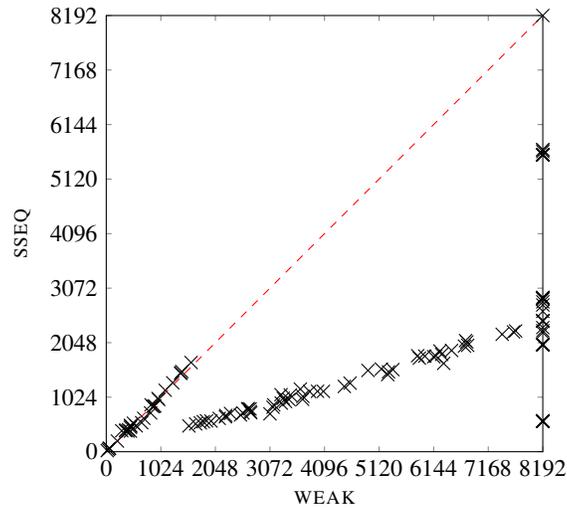
\begin{figure*}[t!]
	\centering
	\subfigure[Runtimes.]{\label{fig:scattertime}
		\begin{tikzpicture}[scale=1]
		\pgfkeys{%
			/pgf/number format/set thousands separator = {}}
		\begin{axis}[
		legend style={at={(0.03,0.93)},anchor=north west}
		, scale only axis
		, font=\small
		, x label style = {at={(axis description cs:0.5,0.02)}}
		, y label style = {at={(axis description cs:0.02,0.5)}}
		, xlabel={\weak}
		, ylabel={\level}
		, width=0.43\textwidth
		, height=0.43\textwidth
		, ymin=0, ymax=1200
		, xmin=0, xmax=1200
		, ytick={0,200,400,600,800,1000,1200}
		, xtick={0,200,400,600,800,1000,1200}
		, major tick length=2pt
		]
		\addplot [mark size=3.5pt, only marks, color=black, mark=x] [unbounded coords=jump] table[col sep=semicolon, y index=1, x index=2] {./scatter-time.csv};

		\addplot [color=red, dashed] [unbounded coords=jump] table[col sep=semicolon, x index=0, y index=0] {./scatter-time.csv};
		\end{axis}
		\end{tikzpicture}
	}\vspace*{0.2cm}
	\subfigure[Memory usage.]{\label{fig:scattermem}
	\begin{tikzpicture}[scale=1]
	\pgfkeys{%
		/pgf/number format/set thousands separator = {}}
	\begin{axis}[
	legend style={at={(0.03,0.93)},anchor=north west}
	, scale only axis
	, font=\small
	, x label style = {at={(axis description cs:0.5,0.02)}}
	, y label style = {at={(axis description cs:0.02,0.5)}}
	, xlabel={\weak}
	, ylabel={\level}
	, width=0.43\textwidth
	, height=0.43\textwidth
	, ymin=0, ymax=8192
	, xmin=0, xmax=8192
	, ytick={0,1024,2048,3072,4096,5120,6144,7168,8192}
	, xtick={0,1024,2048,3072,4096,5120,6144,7168,8192}
	, major tick length=2pt
	]
	\addplot [mark size=3.5pt, only marks, color=black, mark=x] [unbounded coords=jump] table[col sep=semicolon, y index=1, x index=2] {./scatter-mem.csv};

\addplot [color=red, dashed] [unbounded coords=jump] table[col sep=semicolon, x index=0, y index=0] {./scatter-mem.csv};
	\end{axis}
	\end{tikzpicture}
	}
\caption{Instance-wise comparison between \weak and \level.}
\end{figure*}

\subsection{Results}
The results of the experiment show that the new technique is better than state-of-the-art-approaches.
In the vast majority of considered instances the improvements are significant, as seen from the cactus plots in
Figure~\ref{fig:cactus}.
In more detail, \level solves overall 104 instances, whereas the performance achieved by \gcm, \gcs, and \weak is considerably worse, solving 13, 16, and 65 instances in the allotted time, respectively.

Figures~\ref{fig:scattertime} and \ref{fig:scattermem} show instance-by-instance comparisons for the two best-performing algorithms, i.e. \level and \weak.
We recall that each plotted point represent an instance, and a point $(x,y)$ is plotted if the two systems take $x$ and $y$ execution time (resp. memory usage) for evaluating the instance. Therefore, points below the diagonals represent instances where the system reported on the $x$-axis was slower (resp. uses more memory) than the system reported on the $y$-axis.
The graphs confirm the better performance of \level.
Indeed, in Figure~\ref{fig:scattertime}, only few instances are on the left of the diagonals, meaning that are only few instances where \level is slower than \weak.
Concerning the memory usage, Figure~\ref{fig:scattermem} clearly shows \level uses consistently less memory than \weak. We also mention that \level and \weak exceed the allotted memory limited in 1 and 83 instances, respectively. 
It is worth reporting that disabling the optimization of Theorem~\ref{th:optimization}, which is specific of split SEQ semantics, the resulting method could only solve 6 instances, and expectedly the performance w.r.t. memory consumption was also poorer, i.e., it exceeded 51 times the memory limits.

\section{Related Work}\label{sec:related}

Semantics for non-monotonic logic programs~\cite{DBLP:journals/ngc/Przymusinski91,vang-etal-91,DBLP:journals/jcss/YouY94,inou-saka-95,DBLP:journals/amai/EiterLS97,DBLP:conf/lpkr/Seipel97,Balduccini03logicprograms,pere-pint-07,DBLP:journals/japll/AlcantaraDP05,DBLP:journals/logcom/GalindoRC08,DBLP:journals/ai/AmendolaEFLM16,DBLP:journals/tplp/CostantiniF16}
that relax the definition of answer set to overcome the absence of answer sets can be considered in broader terms paracoherent semantics.
Nonetheless, the first approach to the problem of handling inconsistency in ASP programs is the semi-stable semantics by \cite{inou-saka-95}.
Later \cite{DBLP:conf/kr/EiterFM10} identified some anomalies of semi-stable semantics with respect to some epistemic properties, and proposed the semi-equilibrium semantics.
Notably, \cite{DBLP:conf/kr/EiterFM10} also introduced the term paracoherent for the semantics that provide a remedy to the absence of answer sets due to cyclic negation. 
In~\cite{DBLP:journals/ai/AmendolaEFLM16} it was demonstrated that semi-equilibrium semantics features a number of highly desirable theoretical properties for a knowledge representation language,  that are not all fulfilled by previous proposals:
(i) every consistent answer set of a program corresponds to a paracoherent answer set (\textit{answer set coverage});
(ii) if a program has some (consistent) answer set, then its paracoherent answer sets correspond to answer sets (\textit{congruence});
(iii) if a program has a classical model, then it has a paracoherent answer set (\textit{classical coherence});
(iv) a minimal set of atoms should be undefined (\textit{minimal undefinedness});
(v) every true atom must be derived from the program (\textit{justifiability}).
The first two properties ensure that the notions of answer sets and paracoherent answer sets should coincide for coherent programs; the third states that paracoherent answer set should exist whenever the programs admits a (classical) model; the last two state that the number of undefined atoms should be minimized, and every true atom should be derived from the program, respectively.
At the same time, it was observed that semi-equilibrium models do not enjoy the same nice modular composition properties of stable models (e.g., the \textit{splitting set}~\cite{DBLP:conf/iclp/LifschitzT94} modularity tool cannot be used straightforwardly).
Notably, modular composition is used in ASP for simplifying the modeling of problems (actually, the guess and check programming methodology~\cite{DBLP:conf/rweb/EiterIK09} is based on this property) and is a principle underlying the architectures of ASP systems~\cite{DBLP:journals/aim/LierlerMR16}.
The split semi-equilibrium semantics~\cite{DBLP:journals/ai/AmendolaEFLM16} solves this problem by using splitting sequences to decompose the program into hierarchically organized subprograms.
Split semi-equilibrium models are semi equilibrium models that enjoy a modularity property.

Concerning the implementation of semi-stable and semi-equilibrium semantics, we observe that they have been implemented efficiently only recently.
In particular, in~\cite{DBLP:conf/aaai/AmendolaDFLR17} a number of algorithms has been proposed, that compute paracoherent answer sets in two steps: $(i)$ an epistemic transformation of programs is applied, and $(ii)$ a strategy for computing answer sets of minimum gap is implemented by calling (possibly multiple times) an ASP solver.
The same strategy has been improved in~\cite{DBLP:conf/aaai/AmendolaD0R18} by replacing the classic epistemic transformations by more parsimonious ones (that we also adopt).
The new transformations are based on the characterization of paracoherent answer sets in terms of \textit{externally supported models}.
Neither~\cite{DBLP:conf/aaai/AmendolaDFLR17} nor \cite{DBLP:conf/aaai/AmendolaD0R18} support SSEQ semantics that is the focus of this paper.

For the sake of completeness, we mention that the algorithms used for computing paracoherent answer sets are strictly related to the computation of minimal models of propositional theories~\cite{DBLP:conf/tableaux/Niemela96,DBLP:journals/jar/BryY00,Koshimura09,DBLP:journals/ai/JanotaM16,DBLP:journals/ai/AngiulliBFP14}; the reader si referred to~\cite{DBLP:conf/aaai/AmendolaDFLR17} for a detailed discussion.

\section{Conclusion and Future Work}\label{sec:conclusion}
Paracoherent answer set semantics can draw meaningful conclusions also from incoherent programs, and in this way increase the applicability of ASP for solving AI problems~\cite{DBLP:conf/kr/EiterFM10}.
%
Practical applications are possible once efficient implementations are available, and
the complex task of computing efficiently a paracoherent answer set has been approached only recently~\cite{DBLP:conf/aaai/AmendolaDFLR17,DBLP:conf/aaai/AmendolaD0R18}.
State of the art solutions supported the semi-equilibrium semantics but cannot compute the (better) split semi-equilibrium semantics; notably, existing evaluation techniques cannot be adapted straightforwardly to accomplish this task.
We remark that, as mentioned previously, split semi-equilibrium models have to be considered better in the sense that 
they are models that respect the modular structure of the program (as observed in~\cite{DBLP:journals/ai/AmendolaEFLM16}), and, thus, they better fit the intentions of a programmer which usually exploits modularity to produce programs (e.g., by applying the guess and check methodology).

In this paper we presented a novel optimized program transformation that allows for computing a split semi-equilibrium model using a plain ASP solver.
The transformation is elegant and independent from the epistemic transformation used to define semi-equilibrium models.
Moreover, the modularity property of split semi equilibrium models allowed us to devise an optimization that further simplifies the transformed program and improves performance.
We implemented the optimized transformation and run an experiment comparing it against existing implementations for semi-equilibrium models.
Our implementation outperformed the state of the art methods in terms of both memory consumption and solving times, and it was able to solve 160\% more instances than the best alternative solution using the same ASP solver.
In conclusion, the paper shows how \textit{better semi equilibrium models can be computed also more efficiently}. 

\medskip

The availability of efficient methods for computing one paracoherent answer set makes reasonable to start approaching more complex reasoning problems connected with the enumeration of paracoherent answer sets. 
Thus, as far as future work is concerned, we plan to investigate the extension of our techniques to the implementation of cautious and brave reasoning, e.g., on the lines of~\cite{DBLP:conf/ijcai/Alviano18}. Notably, this will not be a straightforward porting.

Finally, we mention that an interesting feature work is to investigate how to extend paracoherent rewriting techniques to non-ground ASP programs.
Actually, our implementation supports non-ground ASP programs by simply disabling grounding simplifications and then using the resulting instantiation as input for the rewriting techniques applied later on.  
However, this may cause a deterioration of the performance since grounding simplifications have been shown to be useful for improving the performance of ASP solvers. Therefore, we plan to investigate if more sophisticated rewriting techniques can be directly applied to non-ground ASP programs.

\bibliographystyle{acmtrans}
\bibliography{biblio}

\begin{thebibliography}{}

\bibitem[\protect\citeauthoryear{Alc{\^{a}}ntara, Dam{\'{a}}sio, and
  Pereira}{Alc{\^{a}}ntara
  et~al\mbox{.}}{2005}]{DBLP:journals/japll/AlcantaraDP05}
{\sc Alc{\^{a}}ntara, J.}, {\sc Dam{\'{a}}sio, C.~V.}, {\sc and} {\sc Pereira,
  L.~M.} 2005.
\newblock An encompassing framework for paraconsistent logic programs.
\newblock {\em Journal of Applied Logic\/}~{\em 3,\/}~1, 67--95.

\bibitem[\protect\citeauthoryear{Alviano}{Alviano}{2018}]{DBLP:conf/ijcai/Alviano18}
{\sc Alviano, M.} 2018.
\newblock Query answering in propositional circumscription.
\newblock In {\em Proceedings of the International Conference on Artificial
  Intelligence ({IJCAI})}. ijcai.org, 1669--1675.

\bibitem[\protect\citeauthoryear{Alviano, Amendola, Dodaro, Leone, Maratea, and
  Ricca}{Alviano et~al\mbox{.}}{2019}]{DBLP:conf/lpnmr/AlvianoADLMR19}
{\sc Alviano, M.}, {\sc Amendola, G.}, {\sc Dodaro, C.}, {\sc Leone, N.}, {\sc
  Maratea, M.}, {\sc and} {\sc Ricca, F.} 2019.
\newblock Evaluation of disjunctive programs in {WASP}.
\newblock In {\em Proceedings of the International Conference on Logic
  Programming and Nonmonotonic Reasoning ({LPNMR})}. Lecture Notes in Computer
  Science, vol. 11481. Springer, 241--255.

\bibitem[\protect\citeauthoryear{Alviano, Calimeri, Dodaro, Fusc{\`{a}}, Leone,
  Perri, Ricca, Veltri, and Zangari}{Alviano
  et~al\mbox{.}}{2017}]{DBLP:conf/lpnmr/AlvianoCDFLPRVZ17}
{\sc Alviano, M.}, {\sc Calimeri, F.}, {\sc Dodaro, C.}, {\sc Fusc{\`{a}}, D.},
  {\sc Leone, N.}, {\sc Perri, S.}, {\sc Ricca, F.}, {\sc Veltri, P.}, {\sc
  and} {\sc Zangari, J.} 2017.
\newblock The {ASP} system {DLV2}.
\newblock In {\em Proceedings of the International Conference on Logic
  Programming and Nonmonotonic Reasoning ({LPNMR})}. Lecture Notes in Computer
  Science, vol. 10377. Springer, 215--221.

\bibitem[\protect\citeauthoryear{Alviano and Dodaro}{Alviano and
  Dodaro}{2016}]{DBLP:journals/tplp/AlvianoD16}
{\sc Alviano, M.} {\sc and} {\sc Dodaro, C.} 2016.
\newblock Anytime answer set optimization via unsatisfiable core shrinking.
\newblock {\em {Theory and Practice of Logic Programming}\/}~{\em 16,\/}~5-6,
  533--551.

\bibitem[\protect\citeauthoryear{Alviano, Dodaro, Leone, and Ricca}{Alviano
  et~al\mbox{.}}{2015}]{AlvianoDLR15}
{\sc Alviano, M.}, {\sc Dodaro, C.}, {\sc Leone, N.}, {\sc and} {\sc Ricca, F.}
  2015.
\newblock Advances in {WASP}.
\newblock In {\em Proceedings of the International Conference on Logic
  Programming and Nonmonotonic Reasoning ({LPNMR})}. Lecture Notes in Computer
  Science. 40--54.

\bibitem[\protect\citeauthoryear{Amendola, Dodaro, Faber, Leone, and
  Ricca}{Amendola et~al\mbox{.}}{2017}]{DBLP:conf/aaai/AmendolaDFLR17}
{\sc Amendola, G.}, {\sc Dodaro, C.}, {\sc Faber, W.}, {\sc Leone, N.}, {\sc
  and} {\sc Ricca, F.} 2017.
\newblock On the computation of paracoherent answer sets.
\newblock In {\em Proceedings of the {AAAI} Conference on Artificial
  Intelligence ({AAAI})}. 1034--1040.

\bibitem[\protect\citeauthoryear{Amendola, Dodaro, Faber, and Ricca}{Amendola
  et~al\mbox{.}}{2018}]{DBLP:conf/aaai/AmendolaD0R18}
{\sc Amendola, G.}, {\sc Dodaro, C.}, {\sc Faber, W.}, {\sc and} {\sc Ricca,
  F.} 2018.
\newblock Externally supported models for efficient computation of paracoherent
  answer sets.
\newblock In {\em Proceedings of the {AAAI} Conference on Artificial
  Intelligence ({AAAI})}. {AAAI} Press, 1720--1727.

\bibitem[\protect\citeauthoryear{Amendola, Eiter, Fink, Leone, and
  Moura}{Amendola et~al\mbox{.}}{2016}]{DBLP:journals/ai/AmendolaEFLM16}
{\sc Amendola, G.}, {\sc Eiter, T.}, {\sc Fink, M.}, {\sc Leone, N.}, {\sc and}
  {\sc Moura, J.} 2016.
\newblock Semi-equilibrium models for paracoherent answer set programs.
\newblock {\em Artificial Intelligence\/}~{\em 234}, 219--271.

\bibitem[\protect\citeauthoryear{Angiulli, Ben{-}Eliyahu, Fassetti, and
  Palopoli}{Angiulli et~al\mbox{.}}{2014}]{DBLP:journals/ai/AngiulliBFP14}
{\sc Angiulli, F.}, {\sc Ben{-}Eliyahu, R.}, {\sc Fassetti, F.}, {\sc and} {\sc
  Palopoli, L.} 2014.
\newblock On the tractability of minimal model computation for some {CNF}
  theories.
\newblock {\em Artificial Intelligence\/}~{\em 210}, 56--77.

\bibitem[\protect\citeauthoryear{Arenas, Bertossi, and Chomicki}{Arenas
  et~al\mbox{.}}{2003}]{DBLP:journals/tplp/ArenasBC03}
{\sc Arenas, M.}, {\sc Bertossi, L.~E.}, {\sc and} {\sc Chomicki, J.} 2003.
\newblock Answer sets for consistent query answering in inconsistent databases.
\newblock {\em {Theory and Practice of Logic Programming}\/}~{\em 3,\/}~4-5,
  393--424.

\bibitem[\protect\citeauthoryear{Balduccini and Gelfond}{Balduccini and
  Gelfond}{2003}]{Balduccini03logicprograms}
{\sc Balduccini, M.} {\sc and} {\sc Gelfond, M.} 2003.
\newblock Logic programs with consistency-restoring rules.
\newblock In {\em Proceedings of the International Symposium on Logical
  Formalization of Commonsense Reasoning, {AAAI Spring Symposium Series}}. Vol.
  102. 9--18.

\bibitem[\protect\citeauthoryear{Balduccini, Gelfond, Watson, and
  Nogueira}{Balduccini et~al\mbox{.}}{2001}]{DBLP:conf/lpnmr/BalducciniGWN01}
{\sc Balduccini, M.}, {\sc Gelfond, M.}, {\sc Watson, R.}, {\sc and} {\sc
  Nogueira, M.} 2001.
\newblock The usa-advisor: {A} case study in answer set planning.
\newblock In {\em Proceedings of the International Conference on Logic
  Programming and Nonmonotonic Reasoning ({LPNMR})}. Lecture Notes in Computer
  Science, vol. 2173. Springer, 439--442.

\bibitem[\protect\citeauthoryear{Baral}{Baral}{2003}]{Baral:2003:KRR:582493}
{\sc Baral, C.} 2003.
\newblock {\em Knowledge Representation, Reasoning, and Declarative Problem
  Solving}.
\newblock Cambridge University Press, New York, NY, USA.

\bibitem[\protect\citeauthoryear{Brewka, Eiter, and Truszczynski}{Brewka
  et~al\mbox{.}}{2011}]{DBLP:journals/cacm/BrewkaET11}
{\sc Brewka, G.}, {\sc Eiter, T.}, {\sc and} {\sc Truszczynski, M.} 2011.
\newblock Answer set programming at a glance.
\newblock {\em Communications of the {ACM}\/}~{\em 54,\/}~12, 92--103.

\bibitem[\protect\citeauthoryear{Bry and Yahya}{Bry and
  Yahya}{2000}]{DBLP:journals/jar/BryY00}
{\sc Bry, F.} {\sc and} {\sc Yahya, A.~H.} 2000.
\newblock Positive unit hyperresolution tableaux and their application to
  minimal model generation.
\newblock {\em Journal of Automated Reasoning\/}~{\em 25,\/}~1, 35--82.

\bibitem[\protect\citeauthoryear{Buccafurri, Leone, and Rullo}{Buccafurri
  et~al\mbox{.}}{2000}]{DBLP:journals/tkde/BuccafurriLR00}
{\sc Buccafurri, F.}, {\sc Leone, N.}, {\sc and} {\sc Rullo, P.} 2000.
\newblock Enhancing disjunctive datalog by constraints.
\newblock {\em {IEEE} Transactions on Knowledge and Data Engineering\/}~{\em
  12,\/}~5, 845--860.

\bibitem[\protect\citeauthoryear{Calimeri, Gebser, Maratea, and Ricca}{Calimeri
  et~al\mbox{.}}{2016}]{DBLP:journals/ai/CalimeriGMR16}
{\sc Calimeri, F.}, {\sc Gebser, M.}, {\sc Maratea, M.}, {\sc and} {\sc Ricca,
  F.} 2016.
\newblock Design and results of the fifth answer set programming competition.
\newblock {\em Artificial Intelligence\/}~{\em 231}, 151--181.

\bibitem[\protect\citeauthoryear{Campeotto, Dovier, and Pontelli}{Campeotto
  et~al\mbox{.}}{2015}]{DBLP:journals/jetai/CampeottoDP15}
{\sc Campeotto, F.}, {\sc Dovier, A.}, {\sc and} {\sc Pontelli, E.} 2015.
\newblock A declarative concurrent system for protein structure prediction on
  {GPU}.
\newblock {\em Journal of Experimental \& Theoretical Artificial
  Intelligence\/}~{\em 27,\/}~5, 503--541.

\bibitem[\protect\citeauthoryear{Costantini and Formisano}{Costantini and
  Formisano}{2016}]{DBLP:journals/tplp/CostantiniF16}
{\sc Costantini, S.} {\sc and} {\sc Formisano, A.} 2016.
\newblock Query answering in resource-based answer set semantics.
\newblock {\em {Theory and Practice of Logic Programming}\/}~{\em 16,\/}~5-6,
  619--635.

\bibitem[\protect\citeauthoryear{Cuteri, Dodaro, and Ricca}{Cuteri
  et~al\mbox{.}}{2019}]{cilc19-debugging}
{\sc Cuteri, B.}, {\sc Dodaro, C.}, {\sc and} {\sc Ricca, F.} 2019.
\newblock Debugging of answer set programs using paracoherent reasoning.
\newblock In {\em Proceedings of the Italian Conference on Computational Logic
  ({CILC})}. {CEUR} Workshop Proceedings, vol. 2396. CEUR-WS.org, 289--299.

\bibitem[\protect\citeauthoryear{Dodaro, Gasteiger, Leone, Musitsch, Ricca, and
  Shchekotykhin}{Dodaro et~al\mbox{.}}{2016}]{DBLP:journals/tplp/DodaroGLMRS16}
{\sc Dodaro, C.}, {\sc Gasteiger, P.}, {\sc Leone, N.}, {\sc Musitsch, B.},
  {\sc Ricca, F.}, {\sc and} {\sc Shchekotykhin, K.} 2016.
\newblock {Combining Answer Set Programming and domain heuristics for solving
  hard industrial problems (Application Paper)}.
\newblock {\em {Theory and Practice of Logic Programming}\/}~{\em 16,\/}~5-6,
  653--669.

\bibitem[\protect\citeauthoryear{Dodaro, Leone, Nardi, and Ricca}{Dodaro
  et~al\mbox{.}}{2015}]{DBLP:conf/rr/DodaroLNR15}
{\sc Dodaro, C.}, {\sc Leone, N.}, {\sc Nardi, B.}, {\sc and} {\sc Ricca, F.}
  2015.
\newblock Allotment problem in travel industry: {A} solution based on {ASP}.
\newblock In {\em Proceedings of the International Conference on Web Reasoning
  and Rule Systems ({RR})}. Lecture Notes in Computer Science, vol. 9209.
  Springer, 77--92.

\bibitem[\protect\citeauthoryear{Eiter, Fink, and Moura}{Eiter
  et~al\mbox{.}}{2010}]{DBLP:conf/kr/EiterFM10}
{\sc Eiter, T.}, {\sc Fink, M.}, {\sc and} {\sc Moura, J.} 2010.
\newblock Paracoherent answer set programming.
\newblock In {\em Proceedings of the International Conference on Principles of
  Knowledge Representation and Reasoning ({KR})}.

\bibitem[\protect\citeauthoryear{Eiter, Ianni, and Krennwallner}{Eiter
  et~al\mbox{.}}{2009}]{DBLP:conf/rweb/EiterIK09}
{\sc Eiter, T.}, {\sc Ianni, G.}, {\sc and} {\sc Krennwallner, T.} 2009.
\newblock Answer set programming: {A} primer.
\newblock In {\em Proceedings of the Reasoning Web International Summer
  School}. Lecture Notes in Computer Science, vol. 5689. Springer, 40--110.

\bibitem[\protect\citeauthoryear{Eiter, Leone, and Sacc{\`{a}}}{Eiter
  et~al\mbox{.}}{1997}]{DBLP:journals/amai/EiterLS97}
{\sc Eiter, T.}, {\sc Leone, N.}, {\sc and} {\sc Sacc{\`{a}}, D.} 1997.
\newblock On the partial semantics for disjunctive deductive databases.
\newblock {\em Annals of Mathematics and Artificial Intelligence\/}~{\em
  19,\/}~1-2, 59--96.

\bibitem[\protect\citeauthoryear{Erdem, Gelfond, and Leone}{Erdem
  et~al\mbox{.}}{2016}]{DBLP:journals/aim/ErdemGL16}
{\sc Erdem, E.}, {\sc Gelfond, M.}, {\sc and} {\sc Leone, N.} 2016.
\newblock Applications of answer set programming.
\newblock {\em {AI} Magazine\/}~{\em 37,\/}~3, 53--68.

\bibitem[\protect\citeauthoryear{Gaggl, Manthey, Ronca, Wallner, and
  Woltran}{Gaggl et~al\mbox{.}}{2015}]{DBLP:journals/tplp/GagglMRWW15}
{\sc Gaggl, S.~A.}, {\sc Manthey, N.}, {\sc Ronca, A.}, {\sc Wallner, J.~P.},
  {\sc and} {\sc Woltran, S.} 2015.
\newblock Improved answer-set programming encodings for abstract argumentation.
\newblock {\em {Theory and Practice of Logic Programming}\/}~{\em 15,\/}~4-5,
  434--448.

\bibitem[\protect\citeauthoryear{Galindo, Ram{\'{\i}}rez, and
  Carballido}{Galindo et~al\mbox{.}}{2008}]{DBLP:journals/logcom/GalindoRC08}
{\sc Galindo, M. J.~O.}, {\sc Ram{\'{\i}}rez, J. R.~A.}, {\sc and} {\sc
  Carballido, J.~L.} 2008.
\newblock Logical weak completions of paraconsistent logics.
\newblock {\em Journal of Logic and Computation\/}~{\em 18,\/}~6, 913--940.

\bibitem[\protect\citeauthoryear{Gebser, Kaminski, Kaufmann, Romero, and
  Schaub}{Gebser et~al\mbox{.}}{2015}]{DBLP:conf/lpnmr/GebserKK0S15}
{\sc Gebser, M.}, {\sc Kaminski, R.}, {\sc Kaufmann, B.}, {\sc Romero, J.},
  {\sc and} {\sc Schaub, T.} 2015.
\newblock Progress in clasp series 3.
\newblock In {\em Proceedings of the International Conference on Logic
  Programming and Nonmonotonic Reasoning ({LPNMR})}. Lecture Notes in Computer
  Science, vol. 9345. Springer, 368--383.

\bibitem[\protect\citeauthoryear{Gebser, Kaminski, Kaufmann, and Schaub}{Gebser
  et~al\mbox{.}}{2012}]{DBLP:series/synthesis/2012Gebser}
{\sc Gebser, M.}, {\sc Kaminski, R.}, {\sc Kaufmann, B.}, {\sc and} {\sc
  Schaub, T.} 2012.
\newblock {\em Answer Set Solving in Practice}.
\newblock Morgan {\&} Claypool Publishers.

\bibitem[\protect\citeauthoryear{Gebser, Kaminski, Kaufmann, and Schaub}{Gebser
  et~al\mbox{.}}{2019}]{DBLP:journals/tplp/GebserKKS19}
{\sc Gebser, M.}, {\sc Kaminski, R.}, {\sc Kaufmann, B.}, {\sc and} {\sc
  Schaub, T.} 2019.
\newblock Multi-shot {ASP} solving with clingo.
\newblock {\em {Theory and Practice of Logic Programming}\/}~{\em 19,\/}~1,
  27--82.

\bibitem[\protect\citeauthoryear{Gebser, Maratea, and Ricca}{Gebser
  et~al\mbox{.}}{2016}]{DBLP:conf/aaai/GebserMR16}
{\sc Gebser, M.}, {\sc Maratea, M.}, {\sc and} {\sc Ricca, F.} 2016.
\newblock What's hot in the answer set programming competition.
\newblock In {\em Proceedings of the {AAAI} Conference on Artificial
  Intelligence ({AAAI})}. {AAAI} Press, 4327--4329.

\bibitem[\protect\citeauthoryear{Gebser, Maratea, and Ricca}{Gebser
  et~al\mbox{.}}{2017}]{DBLP:journals/jair/GebserMR17}
{\sc Gebser, M.}, {\sc Maratea, M.}, {\sc and} {\sc Ricca, F.} 2017.
\newblock The sixth answer set programming competition.
\newblock {\em Journal of Artificial Intelligence Research\/}~{\em 60}, 41--95.

\bibitem[\protect\citeauthoryear{Gelfond and Lifschitz}{Gelfond and
  Lifschitz}{1991}]{gelf-lifs-91}
{\sc Gelfond, M.} {\sc and} {\sc Lifschitz, V.} 1991.
\newblock Classical negation in logic programs and disjunctive databases.
\newblock {\em New Generation Computing\/}~{\em 9,\/}~3/4, 365--386.

\bibitem[\protect\citeauthoryear{Janota and Marques{-}Silva}{Janota and
  Marques{-}Silva}{2016}]{DBLP:journals/ai/JanotaM16}
{\sc Janota, M.} {\sc and} {\sc Marques{-}Silva, J.} 2016.
\newblock On the query complexity of selecting minimal sets for monotone
  predicates.
\newblock {\em Artificial Intelligence\/}~{\em 233}, 73--83.

\bibitem[\protect\citeauthoryear{Kleinberg and Tardos}{Kleinberg and
  Tardos}{2006}]{DBLP:books/daglib/0015106}
{\sc Kleinberg, J.~M.} {\sc and} {\sc Tardos, {\'{E}}.} 2006.
\newblock {\em Algorithm design}.
\newblock Addison-Wesley.

\bibitem[\protect\citeauthoryear{Koshimura, Nabeshima, Fujita, and
  Hasegawa}{Koshimura et~al\mbox{.}}{2009}]{Koshimura09}
{\sc Koshimura, M.}, {\sc Nabeshima, H.}, {\sc Fujita, H.}, {\sc and} {\sc
  Hasegawa, R.} 2009.
\newblock Minimal model generation with respect to an atom set.
\newblock In {\em Proceedings of the International Workshop on First-Order
  Theorem Proving ({FTP})}. {CEUR} Workshop Proceedings, vol. 556. CEUR-WS.org,
  49--59.

\bibitem[\protect\citeauthoryear{Leone, Pfeifer, Faber, Eiter, Gottlob, Perri,
  and Scarcello}{Leone et~al\mbox{.}}{2006}]{LeoneETC2006}
{\sc Leone, N.}, {\sc Pfeifer, G.}, {\sc Faber, W.}, {\sc Eiter, T.}, {\sc
  Gottlob, G.}, {\sc Perri, S.}, {\sc and} {\sc Scarcello, F.} 2006.
\newblock The {DLV} system for knowledge representation and reasoning.
\newblock {\em {ACM} Transactions on Computational Logic\/}~{\em 7,\/}~3,
  499--562.

\bibitem[\protect\citeauthoryear{Lierler, Maratea, and Ricca}{Lierler
  et~al\mbox{.}}{2016}]{DBLP:journals/aim/LierlerMR16}
{\sc Lierler, Y.}, {\sc Maratea, M.}, {\sc and} {\sc Ricca, F.} 2016.
\newblock Systems, engineering environments, and competitions.
\newblock {\em {AI} Magazine\/}~{\em 37,\/}~3, 45--52.

\bibitem[\protect\citeauthoryear{Lifschitz}{Lifschitz}{1999}]{DBLP:conf/iclp/Lifschitz99}
{\sc Lifschitz, V.} 1999.
\newblock Answer set planning.
\newblock In {\em Proceedings of the International Conference on Logic
  Programming ({ICLP})}. {MIT} Press, 23--37.

\bibitem[\protect\citeauthoryear{Lifschitz and Turner}{Lifschitz and
  Turner}{1994}]{DBLP:conf/iclp/LifschitzT94}
{\sc Lifschitz, V.} {\sc and} {\sc Turner, H.} 1994.
\newblock Splitting a logic program.
\newblock In {\em Proceedings of the {International Conference on Logic
  Programming} ({ICLP})}. {MIT} Press, 23--37.

\bibitem[\protect\citeauthoryear{Niemel{\"{a}}}{Niemel{\"{a}}}{1996}]{DBLP:conf/tableaux/Niemela96}
{\sc Niemel{\"{a}}, I.} 1996.
\newblock A tableau calculus for minimal model reasoning.
\newblock In {\em Proceedings of the International Workshop on Theorem Proving
  with Analytic Tableaux and Related Methods ({TABLEAUX})}. 278--294.

\bibitem[\protect\citeauthoryear{Pearce}{Pearce}{2006}]{DBLP:journals/amai/Pearce06}
{\sc Pearce, D.} 2006.
\newblock Equilibrium logic.
\newblock {\em Annals of Mathematics and Artificial Intelligence\/}~{\em
  47,\/}~1-2, 3--41.

\bibitem[\protect\citeauthoryear{Pereira and Pinto}{Pereira and
  Pinto}{2007}]{pere-pint-07}
{\sc Pereira, L.~M.} {\sc and} {\sc Pinto, A.~M.} 2007.
\newblock Approved models for normal logic programs.
\newblock In {\em Proceedings of the International Conference on Logic for
  Programming Artificial Intelligence and Reasoning ({LPAR})}. Lecture Notes in
  Computer Science, vol. 4790. 454--468.

\bibitem[\protect\citeauthoryear{Przymusinski}{Przymusinski}{1991}]{DBLP:journals/ngc/Przymusinski91}
{\sc Przymusinski, T.~C.} 1991.
\newblock Stable semantics for disjunctive programs.
\newblock {\em New Generation Computing\/}~{\em 9,\/}~3/4, 401--424.

\bibitem[\protect\citeauthoryear{Sakama and Inoue}{Sakama and
  Inoue}{1995}]{inou-saka-95}
{\sc Sakama, C.} {\sc and} {\sc Inoue, K.} 1995.
\newblock Paraconsistent stable semantics for extended disjunctive programs.
\newblock {\em Journal of Logic and Computation\/}~{\em 5,\/}~3, 265--285.

\bibitem[\protect\citeauthoryear{Seipel}{Seipel}{1997}]{DBLP:conf/lpkr/Seipel97}
{\sc Seipel, D.} 1997.
\newblock Partial evidential stable models for disjunctive deductive databases.
\newblock In {\em Proceedings of the International Workshop on Logic
  Programming and Knowledge Representation ({LPKR})}. Lecture Notes in Computer
  Science, vol. 1471. Springer, 66--84.

\bibitem[\protect\citeauthoryear{{van Gelder}, Ross, and Schlipf}{{van Gelder}
  et~al\mbox{.}}{1991}]{vang-etal-91}
{\sc {van Gelder}, A.}, {\sc Ross, K.~A.}, {\sc and} {\sc Schlipf, J.~S.} 1991.
\newblock The well-founded semantics for general logic programs.
\newblock {\em Journal of the {ACM}\/}~{\em 38,\/}~3, 620--650.

\bibitem[\protect\citeauthoryear{van Harmelen, Lifschitz, and Porter}{van
  Harmelen et~al\mbox{.}}{2008}]{DBLP:reference/fai/3}
{\sc van Harmelen, F.}, {\sc Lifschitz, V.}, {\sc and} {\sc Porter, B.~W.},
  Eds. 2008.
\newblock {\em Handbook of Knowledge Representation}. Foundations of Artificial
  Intelligence, vol.~3.
\newblock Elsevier.

\bibitem[\protect\citeauthoryear{You and Yuan}{You and
  Yuan}{1994}]{DBLP:journals/jcss/YouY94}
{\sc You, J.} {\sc and} {\sc Yuan, L.} 1994.
\newblock A three-valued semantics for deductive databases and logic programs.
\newblock {\em Journal of Computer and System Sciences\/}~{\em 49,\/}~2,
  334--361.

\end{thebibliography}

\label{lastpage}
\end{document}